\documentclass[journal,10pt]{IEEEtran}
\renewcommand{\baselinestretch}{0.99}

\usepackage{hyperref}
\usepackage{cite}
\usepackage{amsmath,amsthm,amssymb,amsfonts,mathtools}
\usepackage{graphicx,flushend}
\usepackage{algorithm,algorithmic}
\usepackage{subcaption}
\usepackage{booktabs}
\usepackage{xcolor}
\usepackage{comment}

\def\BibTeX{{\rm B\kern-.05em{\sc i\kern-.025em b}\kern-.08em
    T\kern-.1667em\lower.7ex\hbox{E}\kern-.125emX}}

\newtheorem{prop}{Proposition}

\DeclareMathOperator{\Tr}{Tr}

\DeclareMathOperator{\PEB}{PEB}
\DeclareMathOperator{\CRB}{CRB}

\DeclareMathOperator{\Rea}{Re}
\DeclareMathOperator{\Ima}{Im}
\DeclareMathOperator{\atanTwo}{atan2}

\newcommand{\ee}{\mathrm{e}}
\newcommand{\norm}[1]{\left\lVert #1 \right\rVert}
\newcommand{\abs}[1]{\left\lvert #1 \right\rvert}
\newcommand{\br}[1]{\left( #1 \right)}
\newcommand{\sbr}[1]{\left[ #1 \right]}

\newcommand{\p}{\mathbf{p}}
\newcommand{\q}{\mathbf{q}}

\newcommand{\s}{\mathbf{s}}

\newcommand{\y}{\mathbf{y}}
\newcommand{\z}{\mathbf{z}}

\newcommand{\I}{\mathbf{I}}
\newcommand{\J}{\mathbf{J}}

\newcommand{\R}{\mathbf{R}}
\renewcommand{\S}{\mathbf{S}}
\newcommand{\T}{\mathbf{T}}

\newcommand{\Compl}{\mbox{$\mathbb{C}$}}

\newcommand{\diag}{\mathrm{diag}}

\begin{document}

\title{Wideband Sensing with Dynamic Metasurface Antennas under Realistic Phase Response Modeling}

\author{\IEEEauthorblockN{
Ioannis Gavras$^1$ and George C. Alexandropoulos$^{1,2}$ 
} 
\\
\IEEEauthorblockA{$^1$Department of Informatics and Telecommunications, National and Kapodistrian University of Athens, Greece}
\IEEEauthorblockA{$^2$Department of Electrical and Computer Engineering, University of Illinois Chicago, USA}
\\
\IEEEauthorblockA{emails: \{giannisgav, alexandg\}@di.uoa.gr}
\vspace{-1cm}
\thanks{This work has been supported by the SNS JU project 6G-DISAC under the EU's Horizon Europe research and innovation programme under Grant Agreement number 101139130.
}}

\maketitle

\begin{abstract}
This paper investigates the impact of practical features of  the emerging antenna array technology of Dynamic Metasurface Antennas (DMAs) when used for wideband sensing. By adopting a realistic DMA response model capturing frequency selective magnetic polarizability, finite resonant frequency tuning, and waveguide phase and leakage effects, we first present a compact observation model for user localization and multiple scattering points sensing through DMA-based analog combining of Orthogonal Frequency Division Multiplexing (OFDM) pilots transmitted in the uplink direction. Building on this model, we derive the Fisher Information Matrix (FIM), the equivalent FIM, and the corresponding Cram\'er--Rao Bounds (CRBs) for the relevant spatitemporal parameters estimation. The analysis reveals that frequency selectivity reduces the effective information bandwidth and distorts the DMA-based reception manifold, while waveguide attenuation decreases both the coherent combining gain and the effective aperture, thereby degrading estimation accuracy. Numerical results validate the analysis and confirm the resulting inflation in the delay, angle, and position error bounds.

\end{abstract}

\vspace{-0.1cm}
\begin{IEEEkeywords}
Dynamic metasurface antenna, multicarrier sensing, Cram\'er--Rao bound, waveguide attenuation.
\end{IEEEkeywords}

\section{Introduction}
Localization and environmental awareness constitute core internal functionalities as well as end services of the upcoming $6$-th Generation (6G) of wireless networks~\cite{6G-DISAC_mag}. In this context, Dynamic Metasurface Antennas (DMAs) provide a promising scalable antenna array architecture for compact and energy-efficient wide-aperture sensing~\cite{Shlezinger2021Dynamic}, owing to their waveguide-fed structure and electronically tunable metamaterial elements~\cite{35,gavriilidis2025microstrip}. However, unlike ideal phased arrays, practical DMAs operate under hardware-induced constraints, including resonant frequency selectivity, finite tuning range, and waveguide attenuation \cite{carlson2025wideband}. These characteristics become particularly pronounced in wideband operation, where they can substantially alter the sensing information \cite{gavras2024circuit,gavras2025emcompliant_dma_bistatic}.

Prior works have developed waveguide-fed response-tunable metasurface elements and demonstrated electronically reconfigurable DMA apertures \cite{38,35,jabbar202460}, while DMA-based transceivers have been studied for uplink massive Multiple-Input Multiple-Output (MIMO), multiuser MIMO, MIM with Orthogonal Frequency Division Multiplexing (OFDM), and hybrid analog and digital precoding with limited Radio Frequency (RF) chains \cite{36,Shlezinger2021Dynamic,YXA2023,gavriilidis2025nearfield}. DMAs have also recently emerged in sensing, localization, and near-field Integrated Sensing and Communications (ISAC) settings~\cite{gavras2025nearfield_dma_crb,smida2024inband}. Nevertheless, most existing DMA signal processing models remain narrowband and represent the aperture only through Lorentzian-constrained weights, thus, neglecting practical effects such as frequency selectivity, finite tuning range, and waveguide leakage. While the authors in~\cite{carlson2025wideband} recently introduced a realistic wideband DMA model for communication-oriented beamforming and demonstrated substantial deviations from narrowband Lorentzian abstractions at large bandwidths, the impact of such practical DMA effects on sensing and localization accuracy remains insufficiently understood.

Motivated by the latter research gap, this paper studies the impact of realistic wideband DMA characteristics on sensing when considered for a scalable multi-antenna Receiver (RX). We consider an uplink scenario in which a User Equipment (UE) transmits OFDM pilot symbols that are received by the DMA-based RX through a direct path and multiple reflected paths generated by nearby Scattering Points (SPs). To this end, inspired by~\cite{carlson2025wideband}, we consider a practical wideband DMA reception model that accounts for multiple RF outputs and multiple phase configurations, and formulate the corresponding path- and geometry-domain models for direct and reflected uplink propagation. Building on this framework, we derive the Fisher Information Matrix (FIM) and the associated Cram\'er--Rao Bounds (CRBs) for the relevant position parameters. The analysis reveals that frequency selectivity constrains the effective information bandwidth and alters the receive manifold, whereas waveguide attenuation lowers both the coherent combining gain and the effective aperture, thereby reducing localization accuracy. The presented numerical results validate these findings by confirming the resulting inflation in the delay, angle, and Position Error Bounds (PEBs).

\section{System and Channel Models}\label{Sec:System}

We consider a multicarrier uplink sensing system in which the RX employs a DMA with $N_{\rm RF}$ microstrips, each connected to a dedicated reception RF chain. Each microstrip supports $N_{\rm E}$ response-tunable metamaterials, yielding a total of $N \triangleq N_{\rm RF}N_{\rm E}$ antenna elements, arranged as a Uniform Linear Array (ULA). The RX is tasked with sensing the UE and $S$ targets, modeled as SPs, all located in its vicinity. The DMA metamaterial elements realize low power analog combining via adjustable resonant responses~\cite{carlson2025wideband}. The sensing process is enabled by $T$ known OFDM pilot symbols transmitted by the UE over $K$ SubCarriers (SCs) within each channel coherence block, while the RX cycles through $J$ phase configurations.

Let $f_c$ denote the carrier frequency, $\Delta f$ the SC spacing, and $ f_k \triangleq f_c+\left(k-\frac{K+1}{2}\right)\Delta f$, with $ k=1,\ldots,K,$
be the $k$-th SC frequency bin. The pilot matrix is denoted by $\mathbf S\in\Compl^{K\times T}$, whose $k$-th row is $\s_k\triangleq[\S]_{:,k}\in\Compl^{1\times T}$ and satisfying $\mathbb{E}\{\|\mathbf s_k\|^2\}=T$. For each $j$-th DMA phase configuration ($j=1,\ldots,J$), corresponding to a different set of resonant frequencies~\cite{gavriilidis2025nearfield}, on the $k$-th SC, the RX applies the analog combining operation represented by $\mathbf W_j[k]\in\Compl^{N\times N_{\rm RF}}$.

\subsection{Receive Signal Model}
Let $\mathbf r_{j}[k]\in\Compl^{N\times T}$ denote the received signal corresponding to the $j$-th DMA configuration at the $k$-th SC. Assuming a total of $L\triangleq S+1$ propagation paths, comprising one direct path ($\ell=0$) and $S$ reflected paths ($\ell=1,\ldots,S$), the $N$-element received signal in baseband representation is expressed as:
\begin{align}
    \mathbf r_{j}[k]
    =
    \sqrt{P}\sum_{\ell=0}^{L-1}
    \gamma_\ell
    \ee^{-\jmath 2\pi f_k \tau_\ell}
    \mathbf a_\ell[k]\mathbf s_k
    +
    \mathbf n_{j}[k],
    \label{eq:rjk}
\end{align}
where $P$ is the UE transmit power, while $\gamma_\ell$, $\tau_\ell$, and $\mathbf a_\ell[k]\in\Compl^{N\times1}$ denote the complex gain, propagation delay, and array response associated with the $\ell$-th path, respectively. Moreover, $\mathbf n_{j}[k]$ represents Additive White Gaussian Noise (AWGN) with independent entries distributed according to $\mathcal{CN}(0,\sigma^2)$.

After analog combining and pilot removal, the resulting $N_{\rm RF}$-dimensional vector available for baseband processing at the RX side is given as follows:
\begin{align}
    \mathbf y_{j}[k]
    &\triangleq\nonumber
    \frac{1}{\sqrt{T}}\mathbf W_j^{\rm H}[k]\mathbf r_{j}[k]\mathbf s_k^{\rm H}\\
    &=
    \sqrt{P}\sum_{\ell=0}^{L-1}
    \gamma_\ell
    \ee^{-\jmath 2\pi f_k \tau_\ell}
    \mathbf g_{j,\ell}[k]
    +
    \mathbf z_{j}[k],
    \label{eq:yjk}
\end{align}
where $\mathbf g_{j,\ell}[k]\triangleq \mathbf W_j^{\rm H}[k]\mathbf a_\ell[k]\in\Compl^{N_{\rm RF}\times1}$ is the effective DMA sensing manifold of the $\ell$-th path at the $j$-th DMA configuration and the $k$-th SC, and $\z_{j,k}$ denotes the effective post-combining noise vector, which is distributed as $\mathbf z_{j}[k]\sim \mathcal{CN}\!\br{\mathbf 0,\sigma^2\mathbf I_{N_{\rm RF}}}$ following from the independently distributed AWGN entries in $\mathbf n_{j}[k]$ and the normalized combining process, i.e., $\mathbf W_j^{\rm H}[k]\mathbf W_j[k]\cong \mathbf I_{N_{\rm RF}}$. Stacking $\y_{j}[k]$'s across all $J$ DMA phase configurations and all $K$ SCs, yields the compact observation vector $\widetilde{\mathbf{y}}\triangleq    \sbr{\mathbf y_{1}^{\rm T}[1],\ldots,\mathbf y_{1}^{\rm T}[K],\mathbf y_{2}^{\rm T}[1],\ldots,\mathbf y_{J}^{\rm T}[K]}^{\rm T}
    \in\Compl^{JKN_{\rm RF}\times1}.$

\vspace{-0.2cm}
\subsection{DMA Analog Combining Model}
Following \cite{carlson2025wideband}, each DMA element response can modeled using the frequency selective magnetic polarizability law as:
\begin{align}
    \alpha_M(f,f_r)
    \triangleq
    \frac{2\pi f^2F_{\rm cpl}}
    {2\pi f_r^2-2\pi f^2+\jmath\Gamma f},
    \label{eq:alphaM}
\end{align}
where $f_r$ denotes the element's resonant frequency, $F_{\rm coupl}$ is the coupling coefficient, and $\Gamma$ is the damping factor. Using $Q(f)\triangleq2\pi f/\Gamma$, the normalized DMA response becomes:
\begin{align}
    \bar{\alpha}_M(f,f_r)=\frac{1}{Q(f)F}\alpha_M(f,f_r).
\end{align}
An equivalent Lorentzian-form representation is $\bar{\alpha}_M(f,f_r)=0.5(-\jmath-\ee^{\jmath2\Psi(f,f_r)})$, where $\Psi(f,f_r)\triangleq\arctan\!\br{\frac{2\pi(f_r^2-f^2)}{\Gamma f}}$.

To account for waveguide propagation effects, let $\ell_{n,m}$ denote the propagation distance from the $n$-th metamaterial ($n=1,\ldots,N$) to the input of the $m$-th RF chain ($m=1,\ldots,N_{\rm RF}$) along the associated waveguide path, and let $\beta_{g,m}(f)$ and $\bar{\alpha}_{g,m}(f)$ denote the corresponding waveguide phase and attenuation constants, respectively \cite{carlson2025wideband}. Then, each $(n,m)$-th entry of the DMA-based analog combining operation, represented by the matrix $\mathbf W_j[k]$, can be expressed as:
\begin{align}
    &\nonumber[\mathbf W_j[k]]_{n,m}
    =\\
    &c_{n,m}
    \ee^{-(\bar{\alpha}_{g,m}(f_k)+\jmath\beta_{g,m}(f_k))\ell_{n,m}}
    \bar{\alpha}_M\!\br{f_k,f_{r,n,m}^{(j)}},
    \label{eq:Wjk}
\end{align}
where $c_{n,m}$ is a fixed coupling coefficient determined by the hardware interconnection between the $n$-th element of the $m$-th microstrip and its attached RF chain~\cite{carlson2025wideband}, and $f_{r,n,m}^{(j)}$ denotes the corresponding resonant frequency of the $j$-th element configuration. In addition, $\beta_{g,m}(f_k)$ and $\bar{\alpha}_{g,m}(f_k)$ defined in~\cite[eqs. (7)-(9)]{carlson2025wideband} represent the waveguide phase and attenuation constants, respectively, at each $k$-th frequency bin $f_k$. Expression~\eqref{eq:Wjk} clearly highlights that the DMA receive manifold is inherently wideband and dispersive due to two distinct mechanisms: \textit{i}) the element-level resonance response $\bar{\alpha}_M(f_k,f_r)$; and \textit{ii}) the waveguide frequency responses $\beta_{g,m}(f_k)$ and $\bar{\alpha}_{g,m}(f_k)$. Finally, the resonant frequencies are constrained as $f_{r,\min}\leq f_{r,n,m}^{(j)}\leq f_{r,\max}$ $\forall n,m,j$, where $f_{r,\min}$ and $f_{r,\max}$ denote the lower and upper tuning limits, respectively; the tuning bandwidth is defined as $B_{\rm tune}\triangleq f_{r,\max}-f_{r,\min}$.

\vspace{-0.2cm}
\subsection{System Geometry}
We adopt a two-dimensional Cartesian geometry in which the DMA-based RX is located at the origin, the UE is positioned at $\p_{\rm U}\triangleq[x_{\rm U},y_{\rm U}]$, and the $s$-th target ($s=1,\ldots,S$) is located at $\q_s\triangleq[x_s,y_s]$. Under the far-field propagation assumption, the steering vector associated with the $\ell$-th path at the $k$-th SC is given by $\mathbf a_\ell[k] =\frac{1}{\sqrt{N}}\sbr{1,\ldots,\ee^{\jmath \frac{2\pi f_k}{c}(N-1)d_x\sin\phi_\ell}}^{\rm T},$ where $d_x$ denotes the element spacing and $\phi_\ell$ is the corresponding Angle of Arrival (AoA). The path delays can be then expressed as follows:
\begin{align}
    \tau_\ell
    \triangleq
    \begin{cases}
        \frac{\norm{\mathbf p_{\rm U}-\mathbf p_{\rm D}}}{c}, & \ell=0\\
        \frac{\norm{\mathbf p_{\rm U}-\mathbf q_\ell}+\norm{\mathbf q_\ell-\mathbf p_{\rm D}}}{c}, & \ell=1,\ldots,S
    \end{cases},
\end{align}
with $c$ denoting the speed of light, whereas the corresponding AoAs are expressed as:
\begin{align}
    \phi_\ell
    \triangleq
    \begin{cases}
        \atanTwo\!\br{y_{\rm U}-y_{\rm D},x_{\rm U}-x_{\rm D}}, & \ell=0\\
        \atanTwo\!\br{y_\ell-y_{\rm D},x_\ell-x_{\rm D}}, & \ell=1,\ldots,S
    \end{cases}.
\end{align}
Finally, the path gains $\gamma_\ell$ account for both propagation attenuation and reflection coefficients.

\section{Estimation Bounds and CRB Characterization}\label{Sec:CRB}
In this section, we derive the FIM and the corresponding CRBs for the considered DMA sensing model. We begin with a path-domain parametrization of the observation vector, and then map the results to position-domain bounds.

\vspace{-0.2cm}
\subsection{FIM Analysis}
We first collect the relevant channel parameters into the path-domain parameter vector $\boldsymbol{\eta}\triangleq[\boldsymbol{\tau}^{\rm T},\boldsymbol{\phi}^{\rm T},\boldsymbol{\gamma}_{\rm R}^{\rm T},\boldsymbol{\gamma}_{\rm I}^{\rm T}]^{\rm T}
    \in\R^{4L\times1}$, where $\boldsymbol{\tau}\triangleq \sbr{\tau_0,\ldots,\tau_{L-1}}^{\rm T}\in\mathbb{R}^{L\times 1}$, $\boldsymbol{\phi} \triangleq \sbr{\phi_0,\ldots,\phi_{L-1}}^{\rm T}\in\mathbb{R}^{L\times 1}$, $ \boldsymbol{\gamma}_{\rm R} \triangleq \sbr{\Rea\{\gamma_0\},\ldots,\Rea\{\gamma_{L-1}\}}^{\rm T}\in\mathbb{R}^{L\times 1}$, and $\boldsymbol{\gamma}_{\rm I} \triangleq \sbr{\Ima\{\gamma_0\},\ldots,\Ima\{\gamma_{L-1}\}}^{\rm T}\in\mathbb{R}^{L\times 1}$. From \eqref{eq:yjk}, the observation vector $\mathbf y_{j}[k]$ is circularly symmetric complex Gaussian with mean $\boldsymbol{\mu}_{j,k}\triangleq\mathbb{E}[\y_{j}[k]]=\sqrt{P}
    \sum_{\ell=0}^{L-1}\gamma_\ell\ee^{-\jmath2\pi f_k\tau_\ell}\mathbf g_{j,\ell}[k]$ and covariance $\mathbb{E}[(\y_{j}[k]-\boldsymbol{\mu}_{j,k})(\y_{j}[k]-\boldsymbol{\mu}_{j,k})^{\rm H}]=\sigma^2\I_{\rm N_{\rm RF}}$. Hence, the FIM with respect to $\boldsymbol{\eta}$ is given by $\mathbf J_{\eta}\triangleq\sum_{j=1}^{J}\sum_{k=1}^{K}\mathbf J_{j,k},$ where the $(u,v)$-th entry of the local FIM $\mathbf J_{j,k}$, with $u,v=1,\ldots,4L$, becomes:
\begin{align}
    \sbr{\mathbf J_{j,k}}_{u,v}
    \triangleq
    \frac{2P}{\sigma^2}
    \Rea\!\left\{
    \frac{\partial \boldsymbol{\mu}_{j,k}^{\rm H}}{\partial [\boldsymbol{\eta}]_u}
    \frac{\partial \boldsymbol{\mu}_{j,k}}{\partial [\boldsymbol{\eta}]_v}
    \right\}.
    \label{eq:Jjk}
\end{align}

Let the position-domain parameter vector be  $\boldsymbol{\xi}\triangleq\sbr{\mathbf p_{\rm U}^{\rm T},\mathbf q_1^{\rm T},\ldots,\mathbf q_S^{\rm T},\boldsymbol{\gamma}_{\rm R}^{\rm T},\boldsymbol{\gamma}_{\rm I}^{\rm T}}^{\rm T}\in\mathbb{R}^{2(S+L+1)\times 1}$. The path- and geometry-domain parameters are related through the Jacobian transformation matrix $\mathbf{T}$, whose each $(u,v)$-th entry is $ [\mathbf T]_{u,v}\triangleq \frac{\partial[\boldsymbol{\eta}]_u}{\partial[\boldsymbol{\xi}]_v}$. Therefore, the geometry-domain FIM can be expressed as $\widetilde{\mathbf J}=\mathbf{T}^{\rm T}\J_{\eta}\mathbf{T}.$ Next, we perform the partition $\boldsymbol{\xi}=[\boldsymbol{\xi}_i^{\rm T},\boldsymbol{\xi}_n^{\rm T}]^{\rm T}$, where $\boldsymbol{\xi}_i$ contains the position parameters of interest and $\boldsymbol{\xi}_n$ contains nuisance path gains. It can be easily concluded that the geometry-domain FIM $\widetilde{\J}$ admits the following block partition:
\begin{align}
    \widetilde{\mathbf J}
    =
    \begin{bmatrix}
        \widetilde{\mathbf J}_{ii} & \widetilde{\mathbf J}_{in}\\
        \widetilde{\mathbf J}_{in}^{\T} & \widetilde{\mathbf J}_{nn}
    \end{bmatrix}.
\end{align}
Using the Schur complement, the Equivalent FIM (EFIM) for the parameters of interest deduces to $\widetilde{\mathbf J}_{\rm e}\triangleq\widetilde{\mathbf J}_{ii}-\widetilde{\mathbf J}_{in}\widetilde{\mathbf J}_{nn}^{-1}\widetilde{\mathbf J}_{in}^{\rm T}.$ The PEB for the UE and the SPs is then obtained as:
\begin{align}
    \PEB
    \triangleq
    \sqrt{
    \Tr\!\left\{
    [\widetilde{\mathbf J}_{\rm e}^{-1}
    ]_{1:2(S+1),1:2(S+1)}\right\}
    }.
    \label{eq:PEB}
\end{align}

\section{DMA Implications on Multicarrier Sensing}\label{Sec:Impact}
This section investigates the effect of the two practical DMA mechanisms that are most relevant for sensing: frequency selectivity and waveguide attenuation. The analysis proceeds directly from the aforedescribed FIM/EFIM characterization, and then provides interpretable approximations.

\vspace{-0.2cm}
\subsection{Delay Information and Effective Information Bandwidth}
For each single propagation path (corresponding to either the UE or an SP), after omitting the path index $\ell$ for notational convenience, the delay-related term in \eqref{eq:Jjk} is given by:
\begin{align}\label{eq:Jtautau0}
    J_{\tau\tau}=4\pi^2\sum_{j=1}^{J}\sum_{k=1}^{K}\omega_{j,k}f_k^2,
\end{align}
where we have used the definition $\omega_{j,k}\triangleq\frac{2P|\gamma|^2}{\sigma^2}\norm{\mathbf g_j[k]}^2$.

\begin{prop}
Let $\bar{f}_{\omega}\triangleq\frac{\sum_{j=1}^J\sum_{k=1}^K\omega_{j,k}f_k}{\sum_{j=1}^J\sum_{k=1}^K\omega_{j,k}}$ denote the weighted mean frequency, and define $\beta_{\rm eff}^2\triangleq\frac{\sum_{j=1}^J\sum_{k=1}^K\omega_{j,k}(f_k-\bar{f}_{\omega})^2}{\sum_{j=1}^J\sum_{k=1}^K\omega_{j,k}}$ as the corresponding centered frequency spread. Then, the delay-related FIM term in expression~\eqref{eq:Jtautau0} can be expressed as follows:
\begin{align}
    J_{\tau\tau}
    =
    4\pi^2(\beta_{\rm eff}^2+\bar f_{\omega}^2)\sum_{j=1}^{J}\sum_{k=1}^{K}\omega_{j,k}.
    \label{eq:decomp_jtau}
\end{align}
Upon elimination of the nuisance complex path coefficients through the EFIM, the delay information is governed exclusively by the term $\beta_{\rm eff}^2$.
\end{prop}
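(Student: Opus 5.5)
The proof has two parts: an algebraic identity for $J_{\tau\tau}$, and an EFIM computation showing what survives after eliminating the complex gain. For the identity, set $W\triangleq\sum_{j=1}^J\sum_{k=1}^K\omega_{j,k}$ and write $f_k=(f_k-\bar f_\omega)+\bar f_\omega$ inside \eqref{eq:Jtautau0}. This gives
\begin{align*}
\sum_{j,k}\omega_{j,k}f_k^2=\sum_{j,k}\omega_{j,k}(f_k-\bar f_\omega)^2+2\bar f_\omega\sum_{j,k}\omega_{j,k}(f_k-\bar f_\omega)+\bar f_\omega^2 W .
\end{align*}
The cross term vanishes because $\sum_{j,k}\omega_{j,k}(f_k-\bar f_\omega)=W\bar f_\omega-\bar f_\omega W=0$, which is just the definition of $\bar f_\omega$. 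The first term is $W\beta_{\rm eff}^2$, so multiplying by $4\pi^2$ yields \eqref{eq:decomp_jtau}. This is the parallel-axis (variance) decomposition and is routine.

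For the EFIM claim, I would work with a single path and the parameters $(\tau,\gamma_{\rm R},\gamma_{\rm I})$. The angle is handled separately (e.g., held fixed), or its coupling is neglected in this interpretive statement. With $\mu_{j,k}=\sqrt{P}\gamma\,\ee^{-\jmath2\pi f_k\tau}\mathbf g_j[k]$, the derivatives are
\begin{align*}
\partial_\tau\mu_{j,k}&=-\jmath2\pi f_k\,\mu_{j,k},\\
\partial_{\gamma_{\rm R}}\mu_{j,k}&=\mu_{j,k}/\gamma,\\
\partial_{\gamma_{\rm I}}\mu_{j,k}&=\jmath\,\mu_{j,k}/\gamma.
\end{align*}
Substituting into \eqref{eq:Jjk}, with the $P/\sigma^2$ scaling absorbed consistently into $\omega_{j,k}$, gives
\begin{itemize}
\item the gain block $\frac{W}{|\gamma|^2}\mathbf I_2$;
\item the cross terms $J_{\tau\gamma_{\rm R}}=\Rea\{\jmath2\pi\gamma^*\}\sum\omega_{j,k}f_k/|\gamma|^2$ and $J_{\tau\gamma_{\rm I}}=\Rea\{-2\pi\gamma^*\}\sum\omega_{j,k}f_k/|\gamma|^2$.
\end{itemize}
These are linear in $\sum\omega_{j,k}f_k=W\bar f_\omega$, and the squared norm of this cross vector is $4\pi^2W^2\bar f_\omega^2/|\gamma|^2$. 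Taking the Schur complement therefore gives
\begin{align*}
J_{\tau\tau}^{\rm e}=J_{\tau\tau}-\frac{4\pi^2W^2\bar f_\omega^2/|\gamma|^2}{W/|\gamma|^2}=4\pi^2\beta_{\rm eff}^2 W .
\end{align*}
The $\bar f_\omega^2$ part is cancelled exactly, so only the centered spread $\beta_{\rm eff}^2$ remains.

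I expect the main obstacle to be the bookkeeping rather than any genuine difficulty. Two things need care. First, the constant $\frac{2P|\gamma|^2}{\sigma^2}$ and the factor $1/|\gamma|^2$ in the gain derivatives must be handled consistently, since the cancellation relies on the gain block and the cross terms carrying the same weights $\omega_{j,k}$. Second, the statement is per path, so I would remark on how multipath and angle coupling enter. When other paths or the angle $\phi$ are included in the EFIM, they only subtract further positive semidefinite terms. So $\beta_{\rm eff}^2$ still governs the information, in the sense that $J^{\rm e}_{\tau\tau}\le 4\pi^2\beta_{\rm eff}^2W$, with equality for a single resolvable path. The proof would state this as the interpretation of \emph{governed exclusively}.
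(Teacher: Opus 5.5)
Your proposal is correct and follows the paper's own route: form the joint FIM of $\tau$ with the complex gain, then remove the gain by a Schur complement, so that the $\bar f_\omega$-component (a common phase rotation absorbed by $\gamma$) cancels and $4\pi^2\beta_{\rm eff}^2W$ remains. Your explicit derivatives, the $W/|\gamma|^2$ gain block, the exact cancellation of $4\pi^2W\bar f_\omega^2$, and the parallel-axis identity for \eqref{eq:decomp_jtau} make rigorous what the paper only sketches in words. One caveat: your closing ``equality for a single resolvable path'' holds only when $\phi$ is treated as known, because with a frequency-dependent manifold the $\tau$--$\phi$ coupling need not vanish after the gain is projected out.
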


\begin{proof}
The result follows by constructing the FIM for the delay jointly with the unknown complex path coefficient, and subsequently eliminating the latter via the Schur complement. The component involving the weighted mean SC frequency induces only a common phase rotation across SCs, which is indistinguishable from a modification of the unknown complex gain, and is therefore non-identifiable. After this nuisance parameter is removed, the only remaining source of delay information is the spread of the SC frequencies around their weighted mean. This term is exactly the weighted variance, yielding the claimed expression.
\end{proof}

Proposition~1 indicates that the impact of frequency selectivity on the delay estimation is captured by the weights $\omega_{j,k}$'s. Specifically, if the DMA suppresses edge SCs owing to the dispersive response of its elements or to beam-squint mismatch, the effective squared bandwidth $\beta_{\rm eff}^2$ decreases, thereby leading to a larger CRB performance.

\vspace{-0.2cm}
\subsection{Angular Information and Effective Aperture}
It follows from \eqref{eq:Jjk} that the AoA-related Fisher information term, with the path index $\ell$ omitted for simplicity, is given by:
\begin{align}
    J_{\phi\phi}
    =
    \frac{2P|\gamma|^2}{\sigma^2}
    \sum_{j=1}^{J}\sum_{k=1}^{K}
    \left\|
    \mathbf W_j^{\rm H}[k]
    \frac{\partial\mathbf a[k]}{\partial \phi}
    \right\|^2.
    \label{eq:Jphiphi0}
\end{align}
Let us now define the metamaterial element power profile induced by the DMA analog combiner as follows:
\begin{align}
    p_{j,n}[k]
    \triangleq
    \frac{\sum_{m=1}^{N_{\rm RF}}\abs{\sbr{\mathbf W_j[k]}_{n,m}}^2}
    {\sum_{q=1}^{N}\sum_{m=1}^{N_{\rm RF}}\abs{\sbr{\mathbf W_j[k]}_{q,m}}^2},
\end{align}
and its centroid \(\bar{x}_{j,k}\triangleq\sum_{n=1}^{N}p_{j,n}[k]x_n\), where \(x_n=(n-1)d_x\)  \(\forall n\). Then, we define the effective aperture moment as:
\begin{align}
    D_{{\rm eff},j}^2[k]
    \triangleq
    \sum_{n=1}^{N}p_{j,n}[k]\left(x_n-\bar{x}_{j,k}\right)^2.
    \label{eq:Deff}
\end{align}

\begin{prop}
Under matched combining, i.e., when the DMA weights are aligned with the local incident field, the angular-information term in \eqref{eq:Jphiphi0} can be approximated as follows:
\begin{align}
    J_{\phi\phi}
    \approx
    \frac{2P|\gamma|^2}{\sigma^2}
    \sum_{j=1}^{J}\sum_{k=1}^{K}
    \br{\frac{2\pi f_k}{c}\cos\phi}^2
    G_j[k]D_{{\rm eff},j}^2[k],
    \label{eq:JphiphiApprox}
\end{align}
where the definition $G_j[k]\triangleq \|\mathbf g_j[k]\|^2$ was used.
\end{prop}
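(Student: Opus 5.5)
We start from \eqref{eq:Jphiphi0} and compute the angular derivative of the steering vector elementwise. Since $[\mathbf a[k]]_n=\frac{1}{\sqrt N}\ee^{\jmath\frac{2\pi f_k}{c}x_n\sin\phi}$ with $x_n=(n-1)d_x$, we have
\begin{align}
\frac{\partial\mathbf a[k]}{\partial\phi}=\jmath\frac{2\pi f_k}{c}\cos\phi\,\mathbf X\mathbf a[k],
\end{align}
where $\mathbf X\triangleq\diag(x_1,\ldots,x_N)$. The factor $\br{\frac{2\pi f_k}{c}\cos\phi}^2$ then comes out of the norm, and it remains to approximate $\|\mathbf W_j^{\rm H}[k]\mathbf X\mathbf a[k]\|^2$ by $G_j[k]D_{{\rm eff},j}^2[k]$.

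Before doing so, we recenter the aperture. Write $\mathbf X=\bar x_{j,k}\mathbf I+(\mathbf X-\bar x_{j,k}\mathbf I)$. As in Proposition~1, the component $\bar x_{j,k}\mathbf W_j^{\rm H}[k]\mathbf a[k]=\bar x_{j,k}\mathbf g_j[k]$ is collinear with the signal manifold itself. It is therefore absorbed by the nuisance complex gain when the EFIM (Schur complement over $\gamma_{\rm R},\gamma_{\rm I}$) is formed, exactly as the mean frequency was absorbed in Proposition~1. Interpreting $J_{\phi\phi}$ as the information after this projection, we are left with
\begin{align}
\|\mathbf W_j^{\rm H}[k](\mathbf X-\bar x_{j,k}\mathbf I)\mathbf a[k]\|^2,
\end{align}
up to the cross term with $\mathbf g_j[k]$. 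The plan is to show that the matched-combining assumption makes this cross term vanish by the definition of the centroid.

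The core step uses matched combining. We model the combiner as aligned with the incident field, i.e., $[\mathbf W_j[k]]_{n,m}\propto w_{n,m}\,[\mathbf a[k]]_n$ with nonnegative real amplitude profiles $w_{n,m}$, so the phases cancel coherently. Then
\begin{align}
[\mathbf W_j^{\rm H}[k]\mathbf X\mathbf a[k]]_m\propto\sum_n \abs{[\mathbf W_j[k]]_{n,m}}\,\frac{x_n}{\sqrt N}.
\end{align}
In the same way, $G_j[k]$ collects $\sum_m\big(\sum_n\abs{[\mathbf W_j[k]]_{n,m}}\big)^2/N$. To reach the second-moment form, we approximate the per-RF-chain coherent sums by the power-weighted (incoherent-over-$m$) moments. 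Concretely, each RF output is treated as a local sub-aperture whose contribution to the quadratic form equals its combining gain times the power-weighted second central moment of element positions. Summing over $m$ and normalizing by the total power, this gives the $p_{j,n}[k]$-weighted variance, i.e., $G_j[k]\sum_n p_{j,n}[k](x_n-\bar x_{j,k})^2=G_j[k]D_{{\rm eff},j}^2[k]$. Substituting back into \eqref{eq:Jphiphi0} yields \eqref{eq:JphiphiApprox}.

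The main obstacle is this last replacement of coherent amplitude-weighted sums by power-weighted moments. It is exact only when the magnitude profile is approximately flat within each microstrip, or when a Cauchy--Schwarz-type equality holds. When the waveguide attenuation $\ee^{-\bar\alpha_{g,m}\ell_{n,m}}$ is strong the profile is not flat, so this is where the ``$\approx$'' enters. We would justify it by a first-order expansion assuming slowly varying $\abs{[\mathbf W_j[k]]_{n,m}}$ along each microstrip. That same expansion shows why attenuation, by concentrating $p_{j,n}[k]$ near the feed, shrinks $D_{{\rm eff},j}^2[k]$.
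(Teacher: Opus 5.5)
Your derivative step is fine. The decisive step, however, is the replacement of the per-RF-chain coherent sums by a power-weighted second central moment, and it is only asserted. Your own matched-combining model contradicts it. With $[\mathbf W_j[k]]_{n,m}=\sqrt{N}\,w_{n,m}[\mathbf a[k]]_n$, $w_{n,m}\ge 0$, one gets exactly $[\mathbf g_j[k]]_m=\frac{1}{\sqrt N}\sum_n w_{n,m}$ and
\begin{align*}
[\mathbf W_j^{\rm H}[k](\mathbf X-\bar x_{j,k}\mathbf I)\mathbf a[k]]_m=[\mathbf g_j[k]]_m\left(\bar x_m-\bar x_{j,k}\right),\qquad \bar x_m\triangleq\frac{\sum_n w_{n,m}x_n}{\sum_n w_{n,m}}.
\end{align*}
So the centered quantity equals $\sum_m\abs{[\mathbf g_j[k]]_m}^2(\bar x_m-\bar x_{j,k})^2$. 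Each chain contributes a \emph{squared first moment}, so only the spread of the chain centroids appears, never the spread of element positions within a chain.

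In particular, your claim that the replacement is exact when $\abs{[\mathbf W_j[k]]_{n,m}}$ is flat along each microstrip is false. Flatness does make the amplitude- and power-weighted centroids coincide. It is also the only case in which your cross term vanishes, since $\bar x_{j,k}$ is power-weighted while the chain outputs are amplitude-weighted. But in that case the law of total variance shows that $G_j[k]D_{{\rm eff},j}^2[k]$ exceeds the exact value by precisely $G_j[k]$ times the power-weighted within-microstrip position variance. That excess is nonzero whenever $N_{\rm E}>1$.

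The Cauchy--Schwarz equality case is the degenerate one where all elements of a chain sit at one position. The target is reproduced exactly, for example, when each RF chain is fed by a single element, which is the opposite of the analog-combining regime. No slowly-varying expansion can fix this. Under coherent matching the $n\neq n'$ terms add in phase, and that is exactly what destroys the second-moment structure.

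The paper reaches the target by a different, explicit device. It expands $\sum_m\abs{\sum_n[\mathbf W_j[k]]_{n,m}^{*}x_n[\mathbf a[k]]_n}^2$ and \emph{discards} the phase-dependent cross terms $n\neq n'$ (its footnote). This leaves $\frac{1}{N}\sum_n\sum_m\abs{[\mathbf W_j[k]]_{n,m}}^2x_n^2$, a gain times a $p_{j,n}[k]$-weighted second moment. That diagonal (incoherent) approximation is an imposed assumption rather than a consequence of matched phases. Still, it is the missing idea, because it is what produces the power profile and a genuine second moment.

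Your recentering argument is also unsound as justified. The gain $\gamma$ is common to all $(j,k)$. The Schur complement over $(\gamma_{\rm R},\gamma_{\rm I})$ therefore removes only components of the stacked derivative of the form $\zeta\boldsymbol{\mu}_{j,k}$ with one $\zeta\in\Compl$, where $\boldsymbol{\mu}_{j,k}$ is the single-path mean. The centroid component $\jmath\frac{2\pi f_k}{c}\cos\phi\,\bar x_{j,k}\boldsymbol{\mu}_{j,k}$ scales with $f_k$. For constant $\bar x_{j,k}$ it is collinear with $\partial\boldsymbol{\mu}_{j,k}/\partial\tau=-\jmath 2\pi f_k\boldsymbol{\mu}_{j,k}$, since moving the array reference point is a delay change.

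So the analogy with Proposition~1 fails on two counts. This term couples $\phi$ with $\tau$, which is not a nuisance parameter here. And $(j,k)$-dependent centroids cannot be absorbed by any common parameter. The step also silently replaces the raw entry \eqref{eq:Jphiphi0} by an EFIM-type quantity. The paper is silent on the centering too, but your argument does not supply it.
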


\begin{proof}
After substituting the AoA derivative $\partial\mathbf a[k]/\partial\phi$ of the ULA steering vector into the FIM, the relevant term becomes a quadratic form in the element positions, i.e., $\diag(x_1,\ldots,x_N)\mathbf a[k]$. Under matched DMA combining, this quadratic form can be interpreted as the variance of the element positions under the combiner-induced power profile, multiplied by the coherent combining gain $G_j[k]$. Hence, the angular information is approximately\footnote{ The FIM term contains $\|\mathbf W_j^{\rm H}[k]\mathbf \diag(x_1,\ldots,x_N)\mathbf a[k]\|^2$ 
that depends on both the amplitudes and phases of the DMA analog combiner across all constituent metamaterials. Under matched combining, the dominant contribution comes from the element-wise power distribution induced by the DMA, while the oscillatory phase-dependent cross terms are assumed to be negligible, or they average out. This allows the exact quadratic form to be approximated by the coherent gain $G_j[k]$ multiplied by the weighted second central moment of the metasurface element positions, where the weights are given by the normalized DMA power profile across the aperture. Thus, the angular information is approximated by the effective aperture variance seen after DMA combining, rather than by the full phase-coupled quadratic form.} governed by the weighted second central moment of the effective aperture after DMA combining. In other words, the AoA sensitivity is determined by how widely the effectively excited DMA elements are distributed around their weighted centroid. 
\end{proof}

Proposition~2 implies that waveguide attenuation affects AoA estimation in two ways: \textit{i}) it reduces the coherent gain \(G_j[k]\); and, through the leakage-induced taper, \textit{ii}) it decreases the effective aperture moment \(D_{{\rm eff},j}^2[k]\) by assigning greater weights to elements proximal to the RF chain, and reduced weights to those located farther along the waveguide. Consequently, the CRB for the AoA estimation is generally more sensitive to waveguide attenuation than a typical analysis based solely on the Signal-to-Noise Ratio (SNR) would suggest. A larger effective aperture produces stronger spatial phase variation across the array, and hence greater angular information, whereas a narrower, or more heavily attenuated, aperture suppresses this variation resulting in the deterioration of the AoA estimation accuracy.

\vspace{-0.2cm}
\subsection{Leakage-Induced CRB Inflation}
In line with \cite{carlson2025wideband}, it is convenient, for interpretive purposes, to isolate a scalar leakage factor. To this end, let ut consider the following attenuation-only profile:
\begin{align}
    h_{{\rm att},n}(f_k)\triangleq\ee^{-\bar{\alpha}_g(f_k)(n-1)d_x},
\end{align}
and then define the normalized leakage efficiency as follows:
\begin{align}
    A_{\rm leak}(f_k)
    \triangleq
    \frac{\abs{\sum_{n=1}^{N}h_{{\rm att},n}(f_k)}^2}
    {N\sum_{n=1}^{N}\abs{h_{{\rm att},n}(f_k)}^2}.
    \label{eq:Aleak}
\end{align}
When the attenuation profile varies slowly across the signal bandwidth, \(A_{\rm leak}(f_k)\) can be approximated as frequency independent, i.e., $A_{\rm leak}(f_k)\approx A_{\rm leak}$ $\forall k=1,\ldots,K$ \cite{carlson2025wideband}.

\begin{prop}
Consider the observation model in \eqref{eq:yjk} and let $\mathbf J_{\rm leak}$ denote the FIM of the unknown channel parameter vector. Assume that, for each DMA phase configuration (state), the effective sensing manifold of the $\ell$-path at the $k$-th SC can be factorized as follows:
\begin{align}
    \mathbf g_{j,\ell}[k]
    \cong
    \sqrt{A_{{\rm leak},j}}\,
    \bar{\mathbf g}_{j,\ell}[k],
    \,\,
    0<A_{{\rm leak},j}\leq 1,
    \label{eq:g_leak_fact}
\end{align}
where $A_{{\rm leak},j}$ is a scalar leakage efficiency factor associated with the $j$-th DMA state, and $\bar{\mathbf g}_{j,\ell}[k]$ denotes the corresponding leakage-free manifold. Let \(\mathbf{J}_j^{(0)}\) denote the contribution of the \(j\)-th DMA state to the FIM in the leakage-free case, namely, for \(A_{{\rm leak},j}=1\). The FIM under leakage is then given by:
\begin{align}
    \mathbf{J}_{\rm leak}
    =
    \sum_{j=1}^{J}
    A_{{\rm leak},j}\,
    \mathbf{J}_j^{(0)}.
    \label{eq:J_leak_sum}
\end{align}
If the same leakage factor applies across all DMA states, i.e., \(A_{{\rm leak},j}=A_{\rm leak}\) \(\forall j=1,\ldots,J\), then \(\mathbf{J}_{\rm leak}=A_{\rm leak}\mathbf{J}_0\) with $\mathbf{J}_0 \triangleq \sum_{j=1}^{J}\mathbf{J}_j^{(0)}$. Hence, provided that \(\mathbf{J}_0\) is nonsingular, the CRB for any scalar parameter \(\theta_u\) can be expresed as follows:
\begin{align}
    \CRB_{\rm leak}(\theta_u)
    =
    \frac{1}{A_{\rm leak}}\,
    \CRB_0(\theta_u),
    \label{eq:crbLeak}
\end{align}
where \(\CRB_{\rm leak}(\theta_u)\triangleq [\mathbf{J}_{\rm leak}^{-1}]_{u,u}\) and
\(\CRB_0(\theta_u)\triangleq [\mathbf{J}_0^{-1}]_{u,u}\) are the CRBs with and without leakage, respectively.
\end{prop}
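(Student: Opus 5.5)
The plan is to substitute the factorization \eqref{eq:g_leak_fact} into the mean of the observation model and track how the scalar $\sqrt{A_{{\rm leak},j}}$ propagates through the local FIM expression \eqref{eq:Jjk}. First, for each state $j$ and SC $k$, the mean becomes
\begin{align*}
\boldsymbol{\mu}_{j,k}=\sqrt{A_{{\rm leak},j}}\,\bar{\boldsymbol{\mu}}_{j,k},\quad
\bar{\boldsymbol{\mu}}_{j,k}\triangleq\sqrt{P}\sum_{\ell=0}^{L-1}\gamma_\ell\ee^{-\jmath2\pi f_k\tau_\ell}\bar{\mathbf g}_{j,\ell}[k],
\end{align*}
where $\bar{\boldsymbol{\mu}}_{j,k}$ is the leakage-free mean. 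The key observation is that $A_{{\rm leak},j}$ is a fixed hardware quantity that does not depend on any entry of $\boldsymbol{\eta}$ (or of $\boldsymbol{\xi}$). Hence every partial derivative factorizes as $\partial\boldsymbol{\mu}_{j,k}/\partial[\boldsymbol{\eta}]_u=\sqrt{A_{{\rm leak},j}}\,\partial\bar{\boldsymbol{\mu}}_{j,k}/\partial[\boldsymbol{\eta}]_u$. The noise covariance $\sigma^2\mathbf I_{N_{\rm RF}}$ is unchanged by the leakage, consistent with the normalized-combining assumption used after \eqref{eq:yjk}.

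Next, inserting this into \eqref{eq:Jjk}, the real positive scalar $A_{{\rm leak},j}$ comes out of the Hermitian product and out of $\Rea\{\cdot\}$. This gives $\mathbf J_{j,k}=A_{{\rm leak},j}\mathbf J_{j,k}^{(0)}$ entrywise. Summing over $k$ defines $\mathbf J_j^{(0)}\triangleq\sum_k\mathbf J_{j,k}^{(0)}$, and summing over $j$ yields \eqref{eq:J_leak_sum}. The same identity carries over to the geometry domain, because the Jacobian $\mathbf T$ is leakage-independent: $\mathbf T^{\rm T}(\sum_j A_{{\rm leak},j}\mathbf J_j^{(0)})\mathbf T=\sum_j A_{{\rm leak},j}\mathbf T^{\rm T}\mathbf J_j^{(0)}\mathbf T$. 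This lets me state the result for whichever parametrization is meant by ``channel parameter vector.''

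For the equal-leakage case, setting $A_{{\rm leak},j}=A_{\rm leak}$ gives $\mathbf J_{\rm leak}=A_{\rm leak}\mathbf J_0$. If $\mathbf J_0$ is nonsingular, then since $A_{\rm leak}>0$ the matrix $\mathbf J_{\rm leak}$ is also nonsingular and $\mathbf J_{\rm leak}^{-1}=A_{\rm leak}^{-1}\mathbf J_0^{-1}$. Reading off the $(u,u)$ diagonal entry gives \eqref{eq:crbLeak}. I will also remark that the scaling passes through the Schur complement: $\widetilde{\mathbf J}_{\rm e}$ scales by $A_{\rm leak}$ because both its terms scale linearly, since $\widetilde{\mathbf J}_{in}\widetilde{\mathbf J}_{nn}^{-1}\widetilde{\mathbf J}_{in}^{\rm T}$ carries a factor $A_{\rm leak}\cdot A_{\rm leak}^{-1}\cdot A_{\rm leak}$. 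Consequently the PEB in \eqref{eq:PEB} inflates by exactly $1/\sqrt{A_{\rm leak}}$.

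The only delicate point is justifying that the scalar factor is parameter-independent, so that no extra derivative terms $\partial A_{{\rm leak},j}/\partial[\boldsymbol{\eta}]_u$ appear. I will argue this from the definition \eqref{eq:Aleak}: $A_{\rm leak}$ depends only on $\bar{\alpha}_g(f_k)$, $d_x$ and $N$, and not on $\tau_\ell$, $\phi_\ell$, $\gamma_\ell$, or the positions. The factorization itself is taken as an assumption of the statement, so its accuracy does not need to be proven here.
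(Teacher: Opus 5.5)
Your proposal is correct and follows essentially the same route as the paper's proof. You substitute the factorization into the mean, pull the parameter-independent factor $\sqrt{A_{{\rm leak},j}}$ out of every derivative so that each state's contribution to \eqref{eq:Jjk} is scaled by $A_{{\rm leak},j}$, and then invert $A_{\rm leak}\mathbf J_0$ in the equal-leakage case. Your additional remarks are correct extensions that the paper does not spell out: invariance under the leakage-independent Jacobian $\mathbf T$, the linear scaling of the Schur-complement EFIM, and the resulting exact $1/\sqrt{A_{\rm leak}}$ PEB inflation.
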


\begin{proof}
Under \eqref{eq:g_leak_fact}, the mean observation vector at the $j$-th DMA state can be written as $ \boldsymbol{\mu}_{j,k}\cong\sqrt{A_{{\rm leak},j}}\bar{\boldsymbol{\mu}}_{j,k}$, where $\bar{\boldsymbol{\mu}}_{j,k}$ is the corresponding leakage-free mean. Therefore, for any parameter $\theta_u$, $\frac{\partial \boldsymbol{\mu}_{j,k}}{\partial \theta_u}
    \approx
    \sqrt{A_{{\rm leak},j}}\,
    \frac{\partial \bar{\boldsymbol{\mu}}_{j,k}}{\partial \theta_u}$ holds. Substituting this relation into the FIM expression shows that every quadratic score term contributed by state $j$ is multiplied by $A_{{\rm leak},j}$, which directly yields \eqref{eq:J_leak_sum}. If $A_{{\rm leak},j}=A_{\rm leak}$ $\forall j$, then, by factoring out the common scalar, yields $\J_{\rm leak}^{-1}=\frac{1}{A_{\rm leak}}\J_0^{-1}$, and the scalar CRB relation in \eqref{eq:crbLeak} follows immediately.
\end{proof}

Equation \eqref{eq:crbLeak} provides a useful first-order description of leakage-induced CRB inflation. The exact CRB degradation is typically larger because the effective aperture reduction in Proposition 2 acts on top of the scalar gain loss.

\vspace{-0.2cm}
\subsection{Impact on UE Parameter Estimation}
For the direct UE path, the Jacobians of the delay and AoA with respect to the UE position are given as follows:
\begin{align}
    \frac{\partial \tau_0}{\partial \mathbf p_{\rm U}}
    =
    \frac{1}{c}
    \begin{bmatrix}
        \cos\phi_0\\
        \sin\phi_0
    \end{bmatrix},
    \qquad
    \frac{\partial \phi_0}{\partial \mathbf p_{\rm U}}
    =
    \frac{1}{r_0}
    \begin{bmatrix}
        -\sin\phi_0\\
        \cos\phi_0
    \end{bmatrix},
    \label{eq:tauphiJac}
\end{align}
where $r_0\triangleq\norm{\mathbf p_{\rm U}-\mathbf p_{\rm D}}$. Neglecting cross terms for exposition, the UE position EFIM can be written approximately as:
\begin{align}
    \mathbf J_{{\rm p}_{\rm U}}
    \approx
    J_{\tau_0\tau_0}
    \frac{\partial \tau_0}{\partial \mathbf p_{\rm U}}
    \frac{\partial \tau_0}{\partial \mathbf p_{\rm U}^{\rm T}}
    +
    J_{\phi_0\phi_0}
    \frac{\partial \phi_0}{\partial \mathbf p_{\rm U}}
    \frac{\partial \phi_0}{\partial \mathbf p_{\rm U}^{\rm T}}.
    \label{eq:JpApprox}
\end{align}
Evidently, frequency selectivity degrades localization mainly through the reduction of $J_{\tau_0\tau_0}$ via the effective information bandwidth, and through manifold distortion across frequency in $J_{\phi_0\phi_0}$. Additionally, waveguide attenuation degrades localization through coherent gain reduction in both terms, and through effective aperture shrinkage in the angular term.

\begin{table}[!t]
\centering
\caption{The Considered Simulation Parameters.}
\label{tab:sim}
\begin{tabular}{|c|c|c|c|}
\hline
Parameter & Value & Parameter & Value \\
\hline\hline
$f_c$ & 20 GHz & $N$ & 32 \\
\hline
$K$ & 128 & $N_{\rm RF}$, $J$ & 4, 4 \\
\hline
$B=K\Delta f$ & 500 MHz & $Q_c=2\pi f_c/\Gamma$ & 100 \\
\hline
$d_x$ & $\lambda/2$ & $B_{\rm tune}$ & 100--1000 MHz \\
\hline
P & 10 (dBm) & Leakage fraction $\Lambda$ & 0.8 \\
\hline
\end{tabular}
\end{table}

\section{Numerical Results and Discussion}\label{Sec:Results}
In this section, we numerically validate Propositions~1--3 derived for the joint UE localization and SPs sensing system with the wideband DMA response model presented in Section~\ref{Sec:System}. Specifically, we compare the exact FIM-based bounds with their analytical approximations to assess their tightness, as well as to investigate the impact of realistic DMA features on the spatiotemporal parameter estimation performance.

\begin{figure*}[!t]
  \begin{subfigure}[t]{0.33\textwidth}
  \centering
    \includegraphics[width=\textwidth]{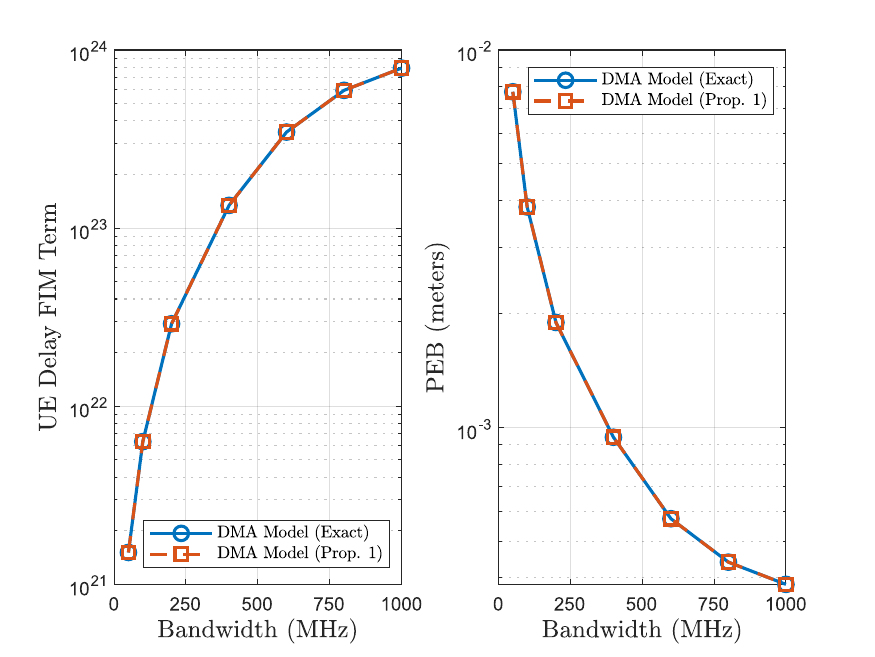}
    \caption{CRB/PEB Delay Inflation.}
    \label{fig:track}
  \end{subfigure}\hfill
  \begin{subfigure}[t]{0.33\textwidth}
  \centering
    \includegraphics[width=\textwidth]{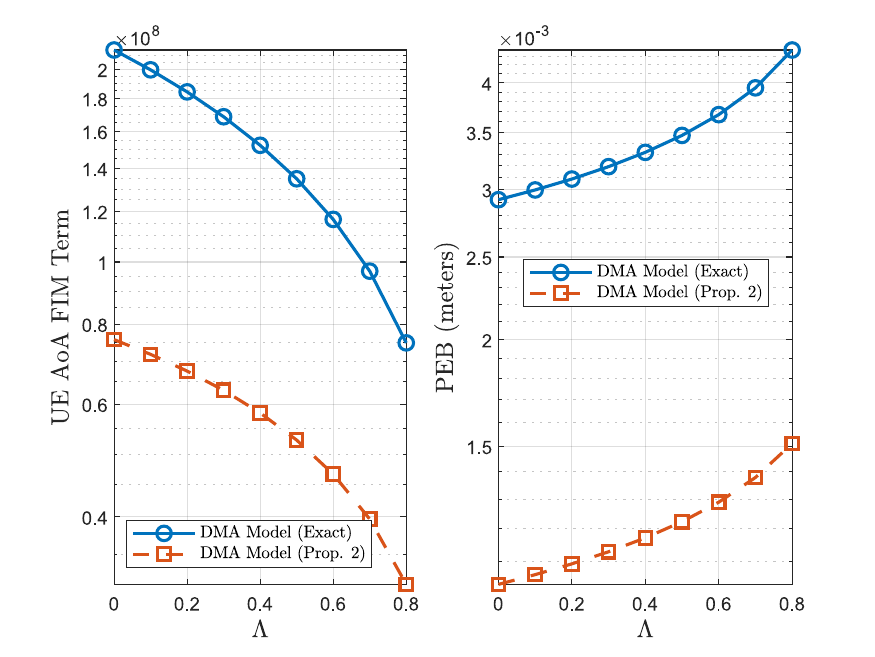}
    \caption{CRB/PEB AoA Inflation.}
    \label{fig:Lower_bound}
  \end{subfigure}\hfill
  \begin{subfigure}[t]{0.33\textwidth}
  \centering
    \includegraphics[width=\textwidth]{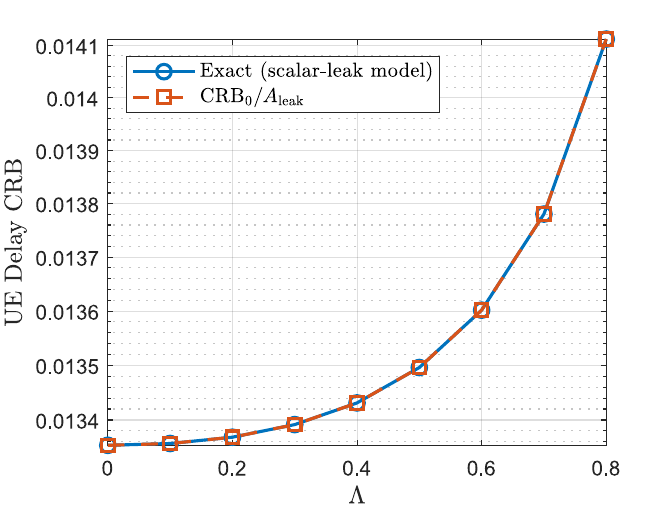}
    \caption{First-Order Leakage Approximation.}
    \label{fig:tradeoff}
  \end{subfigure}
  \caption{\small{Numerical validation of Propositions~1--3 for DMA-based multicarrier sensing. (a) Delay-domain results: exact UE delay FIM term and corresponding PEB, along with the analytical approximations from Proposition~1, versus signal bandwidth $B$. (b) AoA-domain results: exact UE angular-information term and corresponding PEB, together with the effective-aperture-based approximation from Proposition~2, versus the waveguide leakage factor \(\Lambda\). (c) Delay-CRB inflation under waveguide attenuation: exact delay CRB versus the first-order leakage approximation \(\mathrm{CRB}_0/A_{\rm leak}\) from Proposition~3.}}\vspace{-0.4cm}
  \label{fig:Sim}
\end{figure*}

Unless otherwise stated, the simulation setup follows Table~\ref{tab:sim}. The DMA-based RX was centered at the origin and comprises $N$ metamaterial elements arranged along the $x$-axis. The UE and SPs were placed on the two-dimensional plane according to the geometry described in Section~\ref{Sec:System}, with the UE located at $[3,3]$ meters and the $S=2$ SPs located at $[5,3]$ meters and $[4,4]$ meters, respectively. The DMA analog combiner was PEB-optimized in a manner similar to \cite{carlson2025wideband}.

Figure~\ref{fig:Sim} validates the analytical interpretations of Propositions~1--3 under the considered wideband DMA sensing model. As shown in Fig.~\ref{fig:Sim}(a), the approximation in Proposition~1 closely matches both the exact UE delay-information term and the corresponding PEB across the considered bandwidth range. As the bandwidth increases, the delay information improves and the PEB decreases, confirming the expected enhancement in delay resolution and showcasing that this gain is governed by the effective information bandwidth induced by the DMA frequency selectivity. Figure~\ref{fig:Sim}(b) confirms the angular-domain interpretation of Proposition~2 up to multiplicative constants: as the leakage factor \(\Lambda\) increases, the exact angular-information term and the corresponding PEB follow the same trend as the effective-aperture-based approximation. This showcases that waveguide attenuation degrades localization performance through both coherent-gain reduction and effective-aperture shrinkage. Finally, Fig.~\ref{fig:Sim}(c) verifies the first-order leakage law presented in Proposition~3, with the exact delay CRB closely tracking the derived approximation \(\mathrm{CRB}_0/A_{\rm leak}\). Overall, the results confirm that frequency selectivity affects sensing performance primarily through the effective information bandwidth, whereas waveguide attenuation impacts UE localization through both coherent-gain loss and effective-aperture reduction.

\section{Conclusion}
This paper studied multicarrier uplink sensing with a realistic DMA-based RX including multiple RF chains. Exact FIM and EFIM expressions were derived under frequency selective DMA responses, which were subsequently used to characterize the resulting position estimation performance bounds. The analysis showcased that frequency selectivity primarily reduces the effective information bandwidth, whereas waveguide attenuation decreases both the coherent gain and the effective aperture, thereby degrading estimation accuracy. The presented numerical results supported the derived analysis, underscoring the importance of incorporating realistic DMA features into the design of sensing-oriented multicarrier RXs.

\vspace{-0.2cm}
\bibliographystyle{IEEEtran}
\bibliography{ms}

\end{document}